\def\@setthanks{\vspace{-\baselineskip}\def\thanks##1{\@par##1\@addpunct.}\thankses}
\newtheorem{definitionenv}{Definition}
\newtheorem{lemmaenv}[definitionenv]{Lemma}
\newtheorem{theoremenv}[definitionenv]{Theorem}
\newtheorem{corollaryenv}[definitionenv]{Corollary}
\newtheorem{propositionenv}[definitionenv]{Proposition}
\newtheorem{conjectureenv}[definitionenv]{Conjecture}
\newtheorem{problemenv}[definitionenv]{Problem}
\newtheorem{remarkenv}[definitionenv]{Remark}
\newenvironment{remark}{\begin{remarkenv}\rm}{\end{remarkenv}}
\newcommand{\er}{\end{remark}}
\newtheorem{exampleenv}{Example}
\newtheorem{app-lemmaenv}[section]{Lemma}
\newenvironment{definition}{\begin{definitionenv}\rm}{\end{definitionenv}}
\newenvironment{lemma}{\begin{lemmaenv}\rm}{\end{lemmaenv}}
\newenvironment{example}{\begin{exampleenv}\rm}{\end{exampleenv}}
\newenvironment{app-lemma}{\begin{app-lemmaenv}\rm}{\end{app-lemmaenv}}
\theoremstyle{definition}
\begin{document}

\title[]{An Error-Correction Model for Information Transmissions of Social Networks}
\author[]{Daqi (Reinhardt) Fang$^1$$^2$$^3$}\thanks{
\noindent
$^1$Hangzhou Yungu School, Hangzhou, China. \\
$^2$Intellisia Institute, Guangzhou, China. \\
$^3$Center for Complex Decision Analysis, Fudan University, Shanghai, China.\\
Email: Reinhardt114514@outlook.com}
\author[]{Pin-Chieh Tseng$^4$}\thanks{\noindent$^4$Institute of Communications Engineering and the Department of Applied Mathematics, National Yang Ming Chiao Tung University (NYCU), Hsinchu 30010, Taiwan.\\
Email: pichtseng@gmail.com.}

\maketitle

\begin{abstract}
We study the error-correction problem of the communication between two vertices in a social network. By applying the concepts of coding theory into the Social Network Analysis (SNA), we develop the code social network model, which can offer an efficient way to ensure the correctness of the message transmission within the social netwoks. The result of this study could apply in vary of social science studies.

\smallskip
\noindent \textbf{Keywords.} Coding Theory, SNA
\end{abstract}

\section{Introduction}

\subsection{Overview of Apply Coding Theory and SNA}

The application of coding theory to social network analysis has been a topic of great interest in recent years. Coding theory is a branch of mathematics that deals with the efficient transmission of information over noisy channels. It is used to design codes that can be used to encode and decode messages, allowing for reliable communication even in the presence of noise.

Regarding to SNA, it has used in the various algorithms to analysis networks such as pathfinding algorithms (e.g., Dijkstra’s Algorithm), community detection algorithms (e.g., Girvan-Newman Algorithm), link prediction algorithms (e.g., Katz index), and graph embedding algorithms (e.g., DeepWalk, see: Perozzi, Al-Rfou and Skiena \cite{PAS14}) Pathfinding algorithms (Foead et al. \cite{FGKHG21}) are used to find the shortest path between two nodes while community detection algorithms are used to identify groups of nodes that are more densely connected than others in a network (Lancichinetti and Fortunato \cite{LF09}) Link prediction algorithms (Lü and Zhou \cite{LZ11}) are used to predict future connections between nodes while graph embedding algorithms are used to represent networks in low-dimensional vector spaces for further analysis or visualization purposes. Some others worth-noticing algorithms including the Node2Vec (Grover and June 2017), which is an algorithm for learning low-dimensional representations of nodes in a graph, and it uses random walks to explore the graph structure and learn node embeddings that capture both local and global network properties (For example: Hu et al. \cite{HLLL20}) Graph Convolutional Networks (GCN) algorithm that can be used to learn representations of nodes in a graph, and it has been used for tasks such as node classification, link prediction, and community detection (Zhang et al. \cite{ZTXM19}) GraphSAGE is an inductive representation learning algorithm (Hamilton, Ying and Leskovec \cite{HYL17}) for graphs can be used for tasks such as node classification, link prediction, and community detection. 

SNA has been used to study a variety of topics, such as the spread of disease, the diffusion of innovations, and the formation of social movements. Despite its widespread use in many areas, there is still a lack of research and modeling of information transmission within social networks in SNA.

\subsection{Limitations of Information Transmission of the SNA}

Information transmission is an important process in social networks because it allows for the spread of ideas and knowledge among individuals. This process can be studied through various methods such as surveys, interviews, or experiments. However, these methods are often limited in their ability to capture the complexity of information transmission within social networks. As a result, there is a need for more research into how information flows through social networks and how it affects behavior and outcomes.

The first limitation of SNA is that it relies on data from existing networks and hard to simulation. This means that it hard to simulate or capture information about new connections or changes within an existing connections of social network. Additionally, current studies about SNA do not consider the size and transmitted bound in which information is transmitted or received systematically by mathematical method.

A second limitation of SNA is that it does not capture the content of information transmission. While currently SNA can provide insight into how information spreads within a network, it cannot provide insight into what type of information is being shared or how it is being interpreted by different individuals within the network, or how the nodes within a social network made the error and error-correcting. This can lead to inaccurate conclusions about how information spreads within a network and how it affects behavior within the network. For example, what political scientists consider to be a considerable degree of the failure of Zhao Ziyang's faction during the Tiananmen crisis in 1989 (Fewsmith \cite{Few09}; Ziyang \cite{Ziy09}) appears to us to be, to some extent, information transmission errors within his factional network.

Currently, there has two ways to study information transmission within social networks. First one is through Agent-Based Modeling (ABM), which is a type of computer simulation that uses agents—individuals or entities—to represent different actors in a system. These agents interact with each other according to certain rules that are programmed into the model (For example: Frias-Martinez, Williamson and Frias-Martinez \cite{FWF11}; Lane \cite{Lan18}; Li et at. \cite{LYJWL21}) Another way to study information transmission within social networks is through network analysis tools such as graph theory or Exponential Random Graph Models (ERGMs), it provides insights into the structure of a network by analyzing its nodes (individuals) and edges (relationships between individuals).

However, research about information transmission within social networks is still lacking in terms of mathematical algorithms. Algorithms are the backbone of any meaningful SNA, as they are responsible for organizing and managing the information within the social networks. 

\subsection{Structure}
The paper is organized as follows.

Section 2 is a quick review for some basic concept.

In section 3, we will define what is Coding Social Network (CSN), by explaining background parameters, introducing how to encode the most basic simple social network. Then we will classifying the different types of simple CSN

In section 4, we will explain how to perform error detection and correction for complex social networks (usually, a complex social network is composed of multiple simple social networks). And we will encode a “prefect covering social network and give examples.

In section 5, we will discuss the size of a perfect CSN

Section 6 is an application, our aim in this example is to showcase an application in political science of CSN through an example of Chinese elite politics.

\subsection{Academic Contribution of Our Study}

Applying coding theory to SNA can provide valuable insights into the structure and dynamics of complex social systems of social science studies. 

In the context of SNA, mathematically, coding theory can be used to encode the distance, size, error and error-correcting, and upper \& bottom bound of information transmission. For example, it can be used to investigate the amount of noise in a network by encoding messages using an error-correcting code. This allows for more accurate to see transmission of information between nodes in a social network, as well as could reviewing and reducing the amount of information that needs to be transmitted within social networks. 

In practice, firstly, by establishing a CSN, our study could contribute to the field of political factional and elite politics studies by examining the efficiency of factions and simulating the transmission path within these groups. This approach would allow us to identify key players, their relationships, and communication patterns, which can help us understand how factions operate and how they influence decision-making over processes, by using. Coding theory could to analyze CSN data, so that we can detect patterns of information flow of the information transmissions within a political faction and identify potential bottlenecks or vulnerabilities in the political factions.

Another possible application is CSN could be used in studies about the formation of interest groups and simulate the transmission path of the lobby process. This could help solve questions about the relationships between politicians and interest groups. By analyzing the coded data, researchers can identify patterns and trends in social networks between the politicians and lobbers that may not be immediately apparent, and it can then be used to inform policy decisions, improve communication strategies, and enhance social network structures.

Furthermore, CSN can be used in studies about the dynamic social network of social movements and ethnic conflicts. It can help researchers understand why some social movements lead to a bloody end, such as the Tiananmen movement in 1989, or why a specific ethnic conflict happened. By analyzing the patterns of communication and interaction within these networks, CSN can provide insights into the factors that contribute to their success or failure. This information can be used to develop strategies for preventing or resolving conflicts and promoting peaceful coexistence among different groups.

\section{Preliminaries}
In this section, we give a quick review for some basic concepts of error-correction codes. For more details, we recommend readers the book \cite{MS77}.

Let $\mathcal{P}_{n}$ be the collection of all vectors in $F_{q}^{n}$, where $F_{q}$ is a finite field with $q$ elements. A code $C$ of length $n$ is defined as a subset of $\mathcal{P}_{n}$. The elements of $C$ are called codewords. Moreover, for $a = (a_{1}, \dots, a_{n})$, $b(b_{1}, \dots, b_{n}) \in \mathcal{P}_{n}$, we can define their Hamming distance as follows:
\[
d_{H}(a, b) = \lvert \{i : a_{i} \neq b_{i}\} \rvert.
\]
The minimum distance $d$ of a code $C$ is described as the minimum Hamming distance between codewords in $C$. Moreover, the weight of a codeword $a$ is defined as 
\[
{\rm wt}(a) = d(a, \overline{0}),
\]
where $\overline{0}$ is the zero vector. The minimum distance is high related to the ability of error detection and error correction. A code with minimum distance $d$ can detect $d-1$ errors and correct $\lfloor \frac{d-1}{2} \rfloor$ errors. 

For general $C$, to check if a codeword $a$ contains an error, we need to check whether $a \in C$ or not. It may cost lots of time when $q$ is large. In here, we also introduce a special code structure, which provide a convenient way to deal with this problem, called linear code. A linear code $C$ is defined as a $k$-dimensional subspace of $F_{q}^{n}$. Moreover, a linear code can be generated by an $n \times k$ matrix $M$ such that $C = {\rm Im}(M)$. In this case, $M$ is called the generated matrix of $C$. The encoding process $E$ for message $m$ can be defined as follows:
\[
E(m) = M m.
\]
For the matrix $M$, there exists an $(n-k)\times n$ matrix $N$ with rank $n-k$ satisfying $NM = 0$. $N$ is called the parity check matrix and for $a \in F_{q}^{n}$, $a \in C$ if and only if $N a = 0$. Moreover, we denote $C$ by the parameter $[n, k, d]_{q}$, where $n$ is the length of $C$, $k$ is the dimension of $C$ as a subspace and $d$ is the minimum distance.

\begin{example}
    We define $C = [7, 4, 3]_{2}$ by the generated matrix
\[
    M = 
    \begin{pmatrix}
        1 & 0 & 0 & 0\\
        0 & 1 & 0 & 0\\
        0 & 0 & 1 & 0\\
        0 & 0 & 0 & 1\\
        0 & 1 & 1 & 1\\
        1 & 0 & 1 & 1\\
        1 & 1 & 0 & 1\\
    \end{pmatrix}.
\]
Then, the parity check matrix can be taken as
\[
    N = 
    \begin{pmatrix}
        0 & 0 & 0 & 1 & 1 & 1 & 1\\
        0 & 1 & 1 & 0 & 0 & 1 & 1\\
        1 & 0 & 1 & 0 & 1 & 0 & 1\\
    \end{pmatrix}.
\]
\end{example}

\section{Code social network}

\subsection{From social network to code social network}
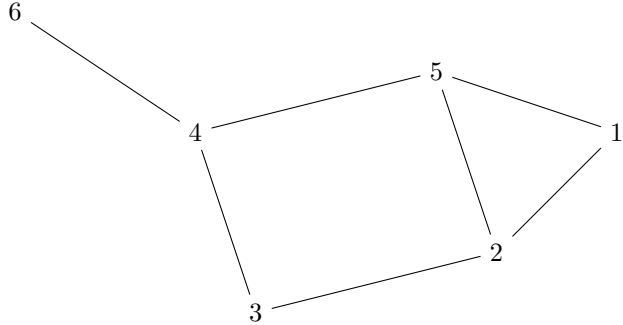
\begin{figure}[b]
    \centering
    \begin{tikzpicture}
    [scale=.8]
    \node (n6) at (1,10) {6};
    \node (n4) at (4,8)  {4};
    \node (n5) at (8,9)  {5};
    \node (n1) at (11,8) {1};
    \node (n2) at (9,6)  {2};
    \node (n3) at (5,5)  {3};

    \foreach \from/\to in   {n6/n4,n4/n5,n5/n1,n1/n2,n2/n5,n2/n3,n3/n4}
    \draw (\from) -- (\to);

    \end{tikzpicture}
    \caption{$S = (V, D)$ with $V = \{1, 2, 3, 4, 5, 6\}$ and $D = \{(1, 2), (1, 5), (2, 3), (2, 5), (3, 4), (4, 5), (4, 6)\}$.}
    \label{fig:ex_S}
\end{figure}

Now, we define $S = (V, D)$ to be a social network with vertex set $V$ and edge set $D$. Note that in the following, we use the word ``vertex" to replace ``node" in social network analysis and regard a social network as a graph. For our description, we give an example in Figure \ref{fig:ex_S}. Then, we define the code social network(CSN) $\hat{S} = (S, C, f)$ as a weighted network which has the same vertex set and edge set as $S$ together with a code $C \subset F_{q}$. Let $a = (a_{1}, \dots, a_{n}) \in C$. An error occurring on $a_{i}$ is assumed as a map sending $a_{i}$ to an element of $F_{q} \setminus a_{i}$ randomly. For $(\alpha, \beta) \in D$, we denote the probability that an error occurs on $a_{i}$ when $\alpha$ sends $a$ to $\beta$ by $p_{(\alpha, \beta)}(a_{i})$. To simplify our model, we assume that the probability $p_{(\alpha, \beta)}(a_{i})$ is independent on the component $i$ and the codeword $a$. Moreover, we denote $p_{(\alpha, \beta)}(a_{i})$ as $p_{(\alpha, \beta)}$. Then, we weight the edge $(\alpha, \beta)$ by the weight function $f$ defined by $f(\alpha, \beta) = p_{(\alpha, \beta)}$. 

\begin{definition}
    Let $\alpha, \beta \in V$ and $P_{(\alpha, \beta)} = (\alpha = \gamma_{0}, \gamma_{1}, \dots, \gamma_{l} = \beta)$ be a path from $\alpha$ to $\beta$ on the weighted network $\hat{S}$. Now, $\alpha$ sends a codeword $a$ along the path ${\rm Path}_{(\alpha, \beta)}$ to $\beta$ and $\beta$ receives the codeword $a'$. Let $E_{(\alpha, \beta)}$ be the expectation value of the Hamming distance $d_{H}(a, a')$. Then, we define the follows.
    \begin{enumerate}
        \item 
        $P_{(\alpha, \beta)}$ is said to be efficient if $ E_{P_{(\alpha, \beta)}} \leq \lfloor \frac{d-1}{2} \rfloor$,
        \item
        $P_{(\alpha, \beta)}$ is said to be semi-efficient if $E_{P_{(\alpha, \beta)}} \leq d-1$,
        \item
        $P_{(\alpha, \beta)}$ is said to be inefficient if $E_{P_{(\alpha, \beta)}} > d-1$.
    \end{enumerate}
\end{definition}

\begin{remark}
    Under our definition and assumption, for two paths $P_{(\alpha, \beta)}, P'_{(\alpha, \beta)}$ with $P_{(\alpha, \beta)}$ has a longer length, we do not have $E_{P(\alpha, \beta)} \geq E_{P'(\alpha, \beta)}$.
\end{remark}

\begin{definition}
    Let $\hat{S} = (S, C, f)$ be a CSN. We define the follows.
    \begin{enumerate}
        \item 
        $\hat{S}$ is said to be efficient if for each $\alpha, \beta \in V$, the shortest path from $\alpha$ to $\beta$ is efficient,
        \item
        $\hat{S}$ is said to be semi-efficient if for each $\alpha, \beta \in V$,  the shortest path from $\alpha$ to $\beta$ is semi-efficient,
        \item
        $\hat{S}$ is said to be inefficient if $\hat{S}$ is not semi-efficient,
        \item
        The critical value $\mathscr{C}(\hat{S})$ of $\hat{S}$ as
        \[
        \mathscr{C}(\hat{S}) = {\rm max}\{l(\alpha, \beta) \mid (\alpha, \beta) \in D\},
        \]
        where $l(\alpha, \beta)$ is denoted by the length of the shortest path from $\alpha$ to $\beta$.
    \end{enumerate}
\end{definition}

\begin{lemma}
    \label{lem:CSN_bin}
    Suppose $f(\alpha, \beta) = p$ is constant for $(\alpha, \beta) \in D$, $C \subset F_{2}^{n}$. Then, 
    \begin{enumerate}
        \item 
        $\hat{S}$ is efficient if and only if
        \[
        n \sum_{k = 0}^{\left\lfloor \frac{l}{2} \right\rfloor} \binom{l}{2k+1} p^{2k+1} (1-p)^{l-2k-1}  \leq \left\lfloor \frac{d-1}{2} \right\rfloor
        \]
        for $1 \leq l \leq \mathscr{C}(\hat{S})$,
        \item
        $\hat{S}$ is semi-efficient if and only if
        \[
        n \sum_{k = 0}^{\left\lfloor \frac{l}{2} \right\rfloor} \binom{l}{2k+1} p^{2k+1} (1-p)^{l-2k-1} \leq d-1
        \]
        for $1 \leq l \leq \mathscr{C}(\hat{S})$,
        \item
        $\hat{S}$ is inefficient if and only if
        \[
        n \sum_{k = 0}^{\left\lfloor \frac{l}{2} \right\rfloor} \binom{l}{2k+1} p^{2k+1} (1-p)^{l-2k-1} > d-1
        \]
        for $1 \leq l \leq \mathscr{C}(\hat{S})$.
    \end{enumerate}
\end{lemma}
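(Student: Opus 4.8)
The plan is to reduce the expected Hamming distance along a path to a per-coordinate computation, and then read off the three cases directly from the definitions of efficient, semi-efficient, and inefficient.

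First I would fix a shortest path $P_{(\alpha,\beta)}$ of length $l$ and a codeword $a \in C$, and analyse a single coordinate. Since $q = 2$, the set $F_{2} \setminus \{a_{i}\}$ is a singleton, so each error on the $i$-th coordinate is a bit flip; as the bit crosses the $l$ edges of the path it is flipped independently on each edge with probability $p$. Hence $a'_{i} \neq a_{i}$ precisely when an odd number of flips occurs among the $l$ edges, and the probability of this event is
\[
q_{l} := \sum_{k=0}^{\lfloor l/2 \rfloor} \binom{l}{2k+1} p^{2k+1}(1-p)^{l-2k-1},
\]
the sum of the odd-index terms of the binomial expansion of $(p + (1-p))^{l}$ (the upper limit $\lfloor l/2 \rfloor$ only contributes a vanishing term when $l$ is even). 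Writing $d_{H}(a,a') = \sum_{i=1}^{n} \mathbf{1}[a'_{i} \neq a_{i}]$ and using the standing assumption that $p_{(\alpha,\beta)}(a_{i})$ is independent of the coordinate $i$ and of $a$, linearity of expectation gives $E_{P_{(\alpha,\beta)}} = n q_{l}$ with $l = l(\alpha,\beta)$. This is the one substantive computation, and it makes the expected distance depend on the path only through its length.

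Next I would record a monotonicity fact: for $0 \le p \le \tfrac12$ one has $q_{l} = \tfrac12\bigl(1 - (1-2p)^{l}\bigr)$, so $q_{l}$ is nondecreasing in $l$. Consequently $E_{P_{(\alpha,\beta)}} = n q_{l(\alpha,\beta)}$ is largest over the pairs $(\alpha,\beta)$ realising the longest shortest path, whose length is exactly $\mathscr{C}(\hat{S})$. Thus requiring a condition ``for every shortest path'' is equivalent to requiring it for every length $l$ with $1 \le l \le \mathscr{C}(\hat{S})$, and by monotonicity this reduces to the single binding length $l = \mathscr{C}(\hat{S})$; I would nonetheless state it in the quantified form to match the lemma.

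Finally I would substitute $E_{P_{(\alpha,\beta)}} = n q_{l}$ into the three defining inequalities. For part (1), $\hat{S}$ is efficient iff every shortest path satisfies $E_{P_{(\alpha,\beta)}} \le \lfloor (d-1)/2\rfloor$, which by the previous paragraph is exactly $n q_{l} \le \lfloor (d-1)/2 \rfloor$ for all $1 \le l \le \mathscr{C}(\hat{S})$; part (2) is identical with the bound $d-1$. Part (3) is the negation of (2): $\hat{S}$ is inefficient iff it fails to be semi-efficient, i.e.\ iff some shortest path has $E_{P_{(\alpha,\beta)}} > d-1$, which is the reversed inequality $n q_{l} > d-1$ holding for some (equivalently, by monotonicity, for the largest) $1 \le l \le \mathscr{C}(\hat{S})$. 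The main point requiring care is precisely this quantifier bookkeeping: parts (1) and (2) are universal over $l$, while part (3), being a negation, is existential, and one must invoke the monotonicity of $q_{l}$ to confirm that the clean ``for $1 \le l \le \mathscr{C}(\hat{S})$'' phrasing faithfully encodes ``for every pair of vertices'' even though not every integer length up to $\mathscr{C}(\hat{S})$ need be realised as a shortest-path length.
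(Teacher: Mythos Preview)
Your argument is correct and follows the same core idea as the paper: in the binary case two flips on a coordinate cancel, so the per-coordinate error probability along a length-$l$ path is the odd-parity binomial sum, and linearity of expectation yields $E_{P_{(\alpha,\beta)}}=nq_{l}$. The paper's proof stops there; your added closed form $q_{l}=\tfrac12\bigl(1-(1-2p)^{l}\bigr)$ and the monotonicity and quantifier discussion (needed to justify ranging over all $1\le l\le \mathscr{C}(\hat S)$, and to reconcile the existential nature of part (3)) are refinements the paper omits, valid under the natural assumption $p\le\tfrac12$.
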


\begin{proof}
Since $C \subset F_{2}^{n}$, for a codeword $a = (a_{1}, \dots, a_{n}) \in C$, two errors occur on $a_{i}$ implies $a_{i}$ is not changed. Thus, we only need to consider that for each component, there are only odd edges which cause error. Then, the expectation value of the number of errors which may change the codeword within path of length $l$ can be calculated as
\[
n \sum_{k = 0}^{\left\lfloor \frac{l}{2} \right\rfloor} \binom{l}{2k+1} p^{2k+1} (1-p)^{l-2k-1}.
\]
\end{proof}

\begin{lemma}
    \label{lem:CSN_nonbin}
    Suppose $f(\alpha, \beta) = p$ is constant for $(\alpha, \beta) \in D$ and $C \subset F_{q}^{n}$. We define the sequences $\{A_{i}\}$, $\{B_{i}\}$ by the recurrence relation
    \[
    A_{1} = p, \quad B_{1} = 1-p,
    \]
    \[
    A_{j} = A_{j-1} p \frac{q-2}{q-1} + B_{j-1} p, \quad B_{j} = A_{j-1} p \frac{1}{q-1} + B_{j-1} (1-p).
    \]
    Then, 
    \begin{enumerate}
        \item 
        $\hat{S}$ is efficient if and only if $nA_{l} \leq \lfloor \frac{d-1}{2} \rfloor$ for $1 \leq l \leq \mathscr{C}(\hat{S})$.
        \item
        $\hat{S}$ is semi-efficient if and only if $nA_{l} \leq d-1$ for $1 \leq l \leq \mathscr{C}(\hat{S})$.
    \end{enumerate}
\end{lemma}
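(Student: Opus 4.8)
The plan is to reduce the statement to a single-coordinate computation and then package the result through the definitions of efficient and semi-efficient. Fix a pair $\alpha,\beta\in V$ and a shortest path from $\alpha$ to $\beta$ of length $l=l(\alpha,\beta)$. Because every edge carries the same error probability $p$, the expected Hamming distance $E_{P_{(\alpha,\beta)}}$ depends only on $l$ and not on the particular geodesic chosen, so the shortest path used in the definitions of efficient and semi-efficient CSNs is well defined. Writing $d_{H}(a,a')=\sum_{i=1}^{n}\mathbf{1}[\,a_i\neq a'_i\,]$ and using linearity of expectation together with the assumption that errors act independently and identically across coordinates (and across edges), I would first establish
\[
E_{P_{(\alpha,\beta)}}=n\,\rho_l,\qquad \rho_l:=\Pr[\text{a fixed coordinate is corrupted after traversing }l\text{ edges}].
\]
Thus the whole task is to identify $\rho_l$ with $A_l$.

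Second, I would model the fate of a single coordinate as a two-state Markov chain with states ``agrees with the original symbol'' and ``disagrees''. The essential point, and exactly where the hypothesis that an error maps $a_i$ to a uniformly random element of $F_{q}\setminus\{a_i\}$ is used, is that the $q-1$ disagreeing symbols are interchangeable, so the process lumps to these two states with no loss of information. One edge acts as follows: from ``agrees'', a fresh error (probability $p$) always produces disagreement, while with probability $1-p$ the symbol is preserved; from ``disagrees'', an error (probability $p$) returns to the original symbol with probability $\tfrac{1}{q-1}$ and lands on another wrong symbol with probability $\tfrac{q-2}{q-1}$, while with probability $1-p$ the existing disagreement simply persists. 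Setting $A_j=\Pr[\text{disagree after }j\text{ edges}]$ and $B_j=\Pr[\text{agree after }j\text{ edges}]$, reading off the base case $A_1=p$, $B_1=1-p$ from a single edge and collecting these transitions yields the coupled linear recurrence governing $\{A_j\}$ and $\{B_j\}$, whence $\rho_l=A_l$ and $E_{P_{(\alpha,\beta)}}=nA_l$.

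Finally, I would translate this into the two equivalences. By definition $\hat{S}$ is efficient iff every shortest path is efficient, that is $nA_{l(\alpha,\beta)}\le\lfloor\frac{d-1}{2}\rfloor$ for all $\alpha,\beta\in V$, and semi-efficient iff $nA_{l(\alpha,\beta)}\le d-1$ for all such pairs. It remains to observe that the shortest-path lengths actually occurring between vertices of $\hat{S}$ are precisely the integers $1,\dots,\mathscr{C}(\hat{S})$: a geodesic realizing the largest distance $\mathscr{C}(\hat{S})$ passes through a vertex at each intermediate distance, so every value in that range is attained and none larger is. Hence the a priori quantification over all vertex pairs is equivalent to quantifying over $1\le l\le\mathscr{C}(\hat{S})$, which gives (1) and (2).

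The main obstacle I anticipate is the careful bookkeeping of the recurrence, specifically accounting for \emph{every} transition into the disagreement state, both the ``fresh error out of agreement'' contribution and the ``persistence of an existing disagreement when no new error occurs'' contribution, so that $A_j$ and $B_j$ genuinely are the per-coordinate disagree/agree probabilities. A clean sanity check is that specializing to $q=2$ must collapse the recurrence to the odd-number-of-errors count $n\sum_{k}\binom{l}{2k+1}p^{2k+1}(1-p)^{l-2k-1}$ appearing in Lemma \ref{lem:CSN_bin}. A secondary but necessary point is rigorously justifying the two-state lumping from the uniform-error hypothesis, and confirming the claim that the realized shortest-path lengths fill out $\{1,\dots,\mathscr{C}(\hat{S})\}$, since this is what makes the finite range $1\le l\le\mathscr{C}(\hat{S})$ exactly equivalent to the condition over all vertex pairs.
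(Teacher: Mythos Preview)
Your approach is essentially the paper's: the paper's entire proof is the single sentence that $A_i$ ``represents the probability that by comparing with the original codeword, an error occurs on one component of the message after passing $i$ edges,'' after which the conclusions are asserted to follow. Your two-state Markov-chain lumping and the reduction by linearity of expectation to $E_{P_{(\alpha,\beta)}}=n\rho_l$ are exactly the right way to flesh this out, and the range claim that the realized geodesic lengths fill $\{1,\dots,\mathscr{C}(\hat S)\}$ is a detail the paper leaves implicit.

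One point deserves emphasis. The obstacle you anticipate---accounting for the ``persistence of an existing disagreement when no new error occurs''---is not mere bookkeeping: carrying it out against the recurrence \emph{as written in the lemma} shows that the stated $A_j$ is missing the term $A_{j-1}(1-p)$. With the transitions you correctly describe, one obtains
\[
A_j = A_{j-1}\Bigl(p\,\tfrac{q-2}{q-1}+(1-p)\Bigr)+B_{j-1}\,p,\qquad B_j = A_{j-1}\,p\,\tfrac{1}{q-1}+B_{j-1}(1-p),
\]
which preserves $A_j+B_j=1$, whereas the lemma's version yields $A_j+B_j=A_{j-1}p+B_{j-1}$ and already fails at $j=2$. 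Your proposed sanity check against the binary case would catch this immediately: for $q=2$ the lemma's recurrence gives $A_2=p(1-p)$ rather than the required $2p(1-p)$. So your plan is sound and coincides with the paper's in spirit, but executing it honestly forces a correction of the stated recurrence rather than a verification of it; the paper's one-line proof simply does not confront this.
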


\begin{proof}
    According to the above definitions, $A_{i}$ represents the probability that by comparing with the original codeword, an error occurs on one component of the message after passing $i$ edges within a social network. Then, we have the result.
\end{proof}

\subsection{Transmission path on efficient CSN}
\label{sec:trans_path}

Let $\hat{S} = (S, C, f)$ be an efficient CSN. In this part, we give a convenient method to figure out the transmission path from $\alpha$ to $\beta$ by labeling each vertex a codeword in $F_{q}^{N}$ for some $N$ is sufficient large. First, we start with the definition.

\begin{definition}
    Let $\hat{S} = (S, C, f)$ be an efficient CSN. If there is a spanning tree $T = (V, D')$ such that $\hat{T} = (T, C, f')$ with
    \[
    f(\alpha, \beta) = f'(\alpha, \beta) \text{ for } (\alpha, \beta) \in D'
    \]
    is efficient, $\hat{S}$ is called super-efficient.
\end{definition}

For a super-efficient CSN $\hat{S}$ together with an efficient $\hat{T}$, our goal is to label each vertex in $V$ by the function $\phi$, which is an injective map from $V$ to $F_{q}^{N}$ such that for $\alpha, \beta \in V$, the path from $\alpha$ to $\beta$ in $T$ can be determined by comparing $\phi(a)$ with $\phi(b))$. Here, we present the rules for defining $\phi$.
\begin{enumerate}
    \item 
    If $\gamma$ is the root of $T$, $\phi(\gamma)$ is the zero vector.
    \item
    Let $(\alpha, \beta) \in D'$ with $\alpha$ is labeled and $\beta$ is unlabeled. Let $\phi(\alpha) = (a_{1}, \dots, a_{N})$ and $k = {\rm max}\{i : a_{i} \neq 0\}$. If there is $b_{k+1}$ such that the codeword $b = (a_{1}, \dots, a_{k}, b_{k+1}, 0, \dots, 0)$ is not used, let $\phi(\beta) = b$. Otherwise, we label $\beta$ by the vector $(a_{1}, \dots, a_{k}, 0, 1, 0 \dots, 0)$.
\end{enumerate}

Under this labeling, if $\alpha$ wants to send a message to $\beta$ with knowing $\phi(\beta)$, a good transmission path can be found as follows.
\begin{enumerate}
    \item 
    If ${\rm wt}(\phi(\alpha)) \geq {\rm wt}(\phi(\beta))$, $a$ should send the message to the vertex which is labeled as the codeword obtained by changing the last non-zero term in $\phi(\alpha)$ to $0$.
    \item
    If ${\rm wt}(\phi(\alpha)) < {\rm wt}(\phi(\beta))$, $\beta$ should receive the message from the vertex which is labeled as the codeword obtained by changing the last non-zero term in $\phi(\beta)$ to $0$.
\end{enumerate}

\begin{example}
\label{exp:path_tree}
    Let $S = (V, D)$ with $V = \{1, 2, 3, 4, 5, 6\}$ and $D = \{(1, 2)$, $(1, 5)$, $(2, 3), (2, 5), (3, 4), (4, 5), (4, 6)\}$. Let $C \subset F_{2}^{n}$ be a code with minimum distance $d \geq \frac{n}{50} + 1$ and $n = 10$. Let $f(a, b) = p = \frac{1}{100}$ for all $(a, b) \in D$. We define $\hat{S} = (S, C, f)$ and obvious that $\mathscr{C}(\hat{S}) = 3$. Then, by applying Lemma \ref{lem:CSN_bin}, we can check $np = \frac{n}{100}$, $2n p(1-p) = \frac{99n}{5000}$ and
    \begin{align*}
    n \sum_{k = 0}^{\left\lfloor \frac{\mathscr{C}(\hat{S})}{2} \right\rfloor} \binom{\mathscr{C}(\hat{S})}{2k+1} p^{2k+1} (1-p)^{\mathscr{C}(\hat{S})-2k-1} &= n \left(3 p (1-p)^{2} + p^{3}\right)\\
    &= \frac{9802 n}{10^{6}}.
    \end{align*}
    Thus, $\hat{S}$ is efficient. Let $T = (V, D')$ with $D' = \{(6, 4), (4, 3), (4, 5), (5, 2), (5, 1)\}$. We can calculate that $\hat{S}$ is super-efficient together with the spanning tree $T$. We define $\phi$ as a map from $V$ to $F_{2}^{5}$ by 
    \begin{align*}
        &\phi(6) = (0, 0, 0, 0, 0), \quad \phi(4) = (1, 0, 0, 0, 0),\\
        &\phi(3) = (1, 1, 0, 0, 0), \quad \phi(5) = (1, 0, 1, 0, 0),\\
        &\phi(2) = (1, 0, 1, 1, 0), \quad \phi(1) = (1, 0, 1, 0, 1),
    \end{align*}
    and 
    \begin{align*}
    {\rm Im}(\phi) = \{&(0, 0, 0, 0, 0), (1, 0, 0, 0, 0), (1, 1, 0, 0, 0),\\
    &(1, 0, 1, 0, 0), (1, 0, 1, 1, 0), (1, 0, 1, 0, 1)\}.
    \end{align*}
    When vertex $6$ wants to send a message to vertex $1$, ${\rm wt}(\phi(6)) = 0 < 3 = {\rm wt}(\phi(1))$ implies that we should compare $\phi(6)$ with $\phi(1)$ and figure out what $j$ is. By observing $j = 1$, vertex $6$ should send the message to vertex $4$ which is labeled by $(1, 0, 0, 0, 0)$. By repeating this process, vertex $4$ should send the message receiving from vertex $6$ to vertex $5$ with the label $(1, 0, 1, 0, 0)$. Then, vertex $5$ sends the message to vertex $1$.
    \begin{align*}
    &\phi(6) = (0, 0, 0, 0, 0) \rightarrow \phi(4) = (1, 0, 0, 0, 0) \rightarrow \\
    &\phi(5) = (1, 0, 1, 0, 0) \rightarrow \phi(1) = (1, 0, 1, 0, 1).
    \end{align*}
    Now, we consider the case that vertex $3$ wants to send a message to vertex $2$. ${\rm wt}(\phi(3)) = 2 < 3 = {\rm wt}(\phi(2))$ and the $2$th component of $\phi(3)$ is $1$ but $\phi(2)$ contains $0$ in the same position. By our rule, vertex $3$ should send the message to a vertex, which is labeled by $(1, 0, 0, 0, 0)$. That is vertex $4$. Then, vertex $4$ should compare $\phi(4)$ with $\phi(2)$ and figure out the label by changing the third component of $\phi(4)$ to $1$. Then, vertex $4$ should send the message to the vertex $5$. Then, vertex $5$ sends the message to vertex $2$.
    \begin{align*}
    &\phi(3) = (1, 1, 0, 0, 0) \rightarrow \phi(4) = (1, 0, 0, 0, 0) \rightarrow \\
    &\phi(5) = (1, 0, 1, 0, 0) \rightarrow \phi(2) = (1, 0, 1, 1, 0).
    \end{align*}
\end{example}

\begin{remark}
Each vertex in an efficient CSN only needs to keep the code ${\rm Im}(\phi)$ and compare the label of itself with the terminal vertex to find a path, which is good enough in the sense of error-correction (but not necessary to be the shortest path), instead of using the traditional Dijkstra Algorithm to figure out and save every shortest paths.
\end{remark}

\begin{remark}
    For efficient CSNs, we can classify them into the following three classes. 
    \begin{enumerate}
        \item 
        If $S$ is a tree, we say that $\hat{S}$ is in type $A$.
        \item
        If $S$ is dense but not a tree, we say $\hat{S}$ is in type $B$.
        \item
        If $S$ is a complete graph, we say $\hat{S}$ is in type $C$.
    \end{enumerate}
    Readers might refer Figure \ref{fig:type} as an example. Clearly, for type $A$, our method mentioned before always use the shortest path. For type $C$, we can label the vertices by codewords in a simplex code. Under this method, we can identify whether $S$ is a complete graph or not by observing the codewords so that in the case of type $C$, we can use the shortest path to complete our transmission.
\end{remark}

\begin{figure}
    
    \centering
    \begin{tikzpicture}
    [scale=.4]
    \node (n6) at (1,10) {6};
    \node (n4) at (4,8)  {4};
    \node (n5) at (7,10)  {5};
    \node (n1) at (0,8) {1};
    \node (n2) at (7,6)  {2};
    \node (n3) at (1,5)  {3};
    \node (n7) at (4,4)  {7};

    \foreach \from/\to in   {n6/n4,n4/n5,n4/n1,n4/n2,n4/n3,n7/n4}
    \draw (\from) -- (\to);

    \end{tikzpicture}
    \quad \quad
    \begin{tikzpicture}
    [scale=.4]
    \node (n6) at (1,10) {6};
    \node (n4) at (4,8)  {4};
    \node (n5) at (7,10)  {5};
    \node (n1) at (4,11) {1};
    \node (n2) at (7,6)  {2};
    \node (n3) at (1,5)  {3};
    \node (n7) at (4,4)  {7};

    \foreach \from/\to in   {n2/n3,n6/n3,n1/n3,n6/n7,n1/n5,n6/n5,n5/n2,n5/n7,n4/n1,n4/n2,n4/n3,n7/n4}
    \draw (\from) -- (\to);

    \end{tikzpicture}
    \quad \quad
    \begin{tikzpicture}
    [scale=.4]
    \node (n1) at (4,11) {1};
    \node (n2) at (7,6)  {2};
    \node (n3) at (1,5)  {3};
    \node (n4) at (4,4)  {4};

    \foreach \from/\to in   {n1/n2,n1/n3,n1/n4,n2/n3,n2/n4,n3/n4}
    \draw (\from) -- (\to);

    \end{tikzpicture}
    \caption{The graph in the left is type $A$. The middle one is type $B$. The graph in the right is type $C$.}
    \label{fig:type}
\end{figure}
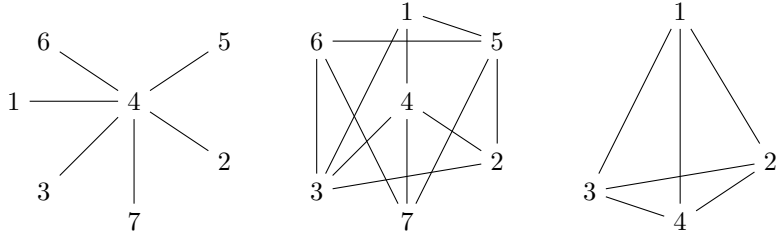

\begin{remark}
    The social network can be seen in many different forms in our society. For example, the political factionalism of the USSR and China is a prime example of a social network. This type of network is characterized by composing of multiple political etlites that are connected through a complex web of alliances and co-interests. Additionally, interest groups lobby-networks in the United States are another example of a non-digraph social network. These networks are composed of various interest groups that are attempting to influence policy decisions through lobbying efforts. Finally, kinship networks within ancient China can also be considered as a type of non-digraph and muti-centralized complex social network (a complex version of our type B in Figure 2) since they involve multiple nodes who are connected through a key kinship node, and this character also made contributions of state building in imperial China (Wang \cite{Wan22}).

\end{remark}

\section{Error detection and correction in Complex CSN}

In this section, we start to apply our model to more complex cases.

\subsection{Covering by super-efficient CSNs}

For a non-efficient CSN $\hat{S}=(S, C, f)$ and vertices $\alpha, \beta$, there might not exist an efficient path from $\alpha$ to $\beta$. To deal with this situation, our approach is to cover each vertices in $\hat{S}$ by super-efficient CSNs $\{\hat{S}_{i}=(S_{i}, C, f)\}$ with each $S_{i} = (V_{i}, D_{i})$ is an induced subgraph of $S$. Furthermore, to allow the transmission between different $\hat{S}_{i}$, we need to use a ``reachable" covering.

\begin{definition}
    Let $N = \{\hat{S}_{i}\}$ be a collection of CSNs.
    \begin{enumerate}
        \item 
        $N$ is said to be a covering of $\hat{S}$ if each $S_{i}$ is an induced subgraph of $S$ and $\bigcup_{i} V_{i} = V$.
        \item
        A covering $N$ of $\hat{S}$ is said to be reachable if for every subset $A$ of $\mathcal{P} = \{1, \dots, \lvert N \rvert\}$, 
        \[
        \bigcup_{i \in A} V_{i} \cap \bigcup_{i \in \mathcal{P} \setminus A} V_{i} \neq \emptyset.
        \]
    \end{enumerate}
\end{definition}

\begin{definition}
    We say that $N = \{\hat{S}_{i}\}$ is an efficient covering of $\hat{S}$ if 
    \begin{enumerate}
        \item 
        $\hat{S}_{i}$ is super-efficient for each $i$,
        \item
        $N$ is a covering of $\hat{S}$, and
        \item
        $N$ is reachable.
    \end{enumerate}
\end{definition}

Under this definition, every efficient covering offers a transmission path for a pair of vertices in $\hat{S}$. Let $\alpha, \beta$ be two vertices. Then, there exist $i, j$ such that $\alpha$ is a vertex of $\hat{S}_{i}$ and $\beta$ is a vertex of $\hat{S}_{j}$. Because of the reachable property, there exist a sequence of super-efficient CSNs $\{\hat{S}_{a_{k}}\}_{k = 1}^{m}$ for some integer $m$ such that 
\begin{enumerate}
    \item $\hat{S}_{a_{0}} = \hat{S}_{i}$,
    \item $\hat{S}_{a_{m}} = \hat{S}_{j}$,
    \item $V_{a_{k}} \cap V_{a_{k+1}} \neq \emptyset$ for all $k$.
\end{enumerate}
Now, the transmission process from $\alpha$ to $\beta$ follows the steps:
\begin{enumerate}
    \item $\alpha$ sends a message to a vertex $\alpha_{1}$ in $V_{a_{0}} \cap V_{a_{1}}$. Then, $\alpha_{1}$ dose once error-correction process.
    \item
    For each $1 
    \leq k \leq m-1$, $\alpha_{k}$ sends a message to a vertex $\alpha_{k+1}$ in $V_{a_{k}} \cap V_{a_{k+1}}$. Then, $\alpha_{k+1}$ dose once error-correction process.
    \item $\alpha_{m}$ sends the message to $\beta$. Then, $\beta$ dose once error-correction and decoding process to obtain the information.
\end{enumerate}

To simplify our model, we suppose $f$ is a constant map. Using the formulas described in Lemma \ref{lem:CSN_bin} and Lemma \ref{lem:CSN_nonbin}, we can estimate an integer $r$ such that every path with length less or equal to $r$ is efficient. For a vertex $\alpha$ and an integer $k$, we define $B_{k}(\alpha)$ as the induced subgraph with vertex set
\[
\{\beta \in V : d_{G}(\alpha, \beta) \leq k\},
\]
where $d_{G}(\alpha, \beta)$ denotes the length of the shortest path from $\alpha$ to $\beta$. Under our assumption, the CSN $(B_{\lfloor\frac{r}{2}\rfloor}(\alpha), C, f)$ is super-efficient.

\begin{definition}
    A subset $M_{r} = \{\alpha_{i}\} \subset V$ is said to be a covering set of $\hat{S}$ if $\{(B_{r}(\alpha_{i}), C, f)\}$ forms an efficient covering of $\hat{S}$. Moreover, the maximum integer $r$ such that the covering set $M_{r}$ of $\hat{S}$ exists is called the radius of $\hat{S}$.
\end{definition}

\begin{remark}
    Finding the smallest covering set of $\hat{S}$ is high related to the problem of minimum $k$-path vertex cover, which has been studied in \cite{BKKS11}, \cite{LZX16}, and \cite{YSM16}.
\end{remark}

Every covering set represents a transmission method for every two vertices of $\hat{S}$. Moreover, the size of a covering set implies the maximum times of error-correction process should be applied during the transmission.

\subsection{Perfect covering}

In this part, we consider CSNs with a great structure, which is called perfect covering.

\begin{definition}
Let $\hat{S}$ be a CSN.
\begin{enumerate}
    \item 
    An efficient covering $N$ of $\hat{S}$ is said to be perfect if for every $\hat{S}_{i}$, $\hat{S}_{j} \in N$, $V_{i} \cap V_{j} \neq \emptyset$.
    \item 
    $\hat{S}$ is called perfect if there exists a perfect covering.
\end{enumerate}
\end{definition}
With the given definition, we observe the simple result.
\begin{lemma}
    Every transmission process between two vertices in a perfect CSN only need to do the error-correction process twice.
\end{lemma}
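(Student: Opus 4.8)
The plan is to use the structure of a perfect covering directly. Let $N = \{\hat{S}_i\}$ be a perfect covering of $\hat{S}$, and let $\alpha, \beta \in V$ be arbitrary vertices to which we wish to transmit a message. By the covering property, there exist indices $i, j$ such that $\alpha \in V_i$ and $\beta \in V_j$ (if $\alpha$ and $\beta$ happen to lie in a common $V_i$, the transmission stays inside a single super-efficient CSN and the argument simplifies, so I would dispose of that case first). The key observation is that, unlike the general reachable covering where one needs a whole chain $\hat{S}_{a_0}, \dots, \hat{S}_{a_m}$ of overlapping pieces, the perfect property guarantees $V_i \cap V_j \neq \emptyset$ for \emph{every} pair $i, j$. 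Hence I can always take a chain of length $m = 1$, i.e.\ a single intermediate vertex.

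The core of the argument is then to exhibit the explicit two-hop transmission and count the error-correction steps. First I would pick a vertex $\gamma \in V_i \cap V_j$, which exists by perfectness. Because $\hat{S}_i = (S_i, C, f)$ is super-efficient, there is an efficient transmission path inside $S_i$ from $\alpha$ to $\gamma$; since the path is efficient, the expected Hamming distance between the sent and received codewords is at most $\lfloor \frac{d-1}{2} \rfloor$, so the code $C$ can correct the accumulated error. Thus $\gamma$ performs one error-correction process and recovers the original codeword. Next, since $\hat{S}_j$ is super-efficient and $\gamma, \beta \in V_j$, there is an efficient path inside $S_j$ from $\gamma$ to $\beta$; $\beta$ performs one error-correction (and decoding) process and recovers the message. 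This exhibits exactly two error-correction operations, matching the statement.

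I would phrase the conclusion by specializing the general transmission scheme already described before the lemma: the perfect covering is the special case of a reachable covering in which the chain connecting $\hat{S}_i$ to $\hat{S}_j$ can always be chosen with $m = 1$, so steps (1)--(3) of that scheme collapse to ``$\alpha \to \gamma$, correct at $\gamma$; $\gamma \to \beta$, correct and decode at $\beta$.'' The one subtlety worth stating explicitly is \emph{why} each single hop needs only one correction: this is precisely because each $\hat{S}_i$ is super-efficient, so the efficient spanning-tree path inside it keeps the expected number of errors below the correction threshold $\lfloor \frac{d-1}{2}\rfloor$, and one application of the decoder suffices per piece.

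The main obstacle, and the point I would be most careful about, is the boundary case where $\alpha$ and $\beta$ lie in the same piece $V_i$: then there is no separate intermediate vertex and only \emph{one} correction is strictly needed, so the statement should be read as ``at most twice.'' A secondary point to check is that efficiency of a \emph{path} (Definition of efficient path) transfers correctly to the two concatenated sub-paths; since the error-correction at $\gamma$ resets the codeword to the true $a$ before the second hop begins, the two hops do not compound their errors, and each hop is independently governed by the super-efficiency of its piece. I expect no serious difficulty beyond stating these points cleanly.
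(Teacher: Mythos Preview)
Your proposal is correct and follows exactly the approach the paper has in mind: the paper actually gives no written proof at all, merely stating that ``with the given definition, we observe the simple result,'' so your argument simply spells out the intended observation that perfectness lets the chain in the general transmission scheme collapse to $m=1$. Your handling of the same-piece boundary case (reading the bound as ``at most twice'') is more careful than the paper itself.
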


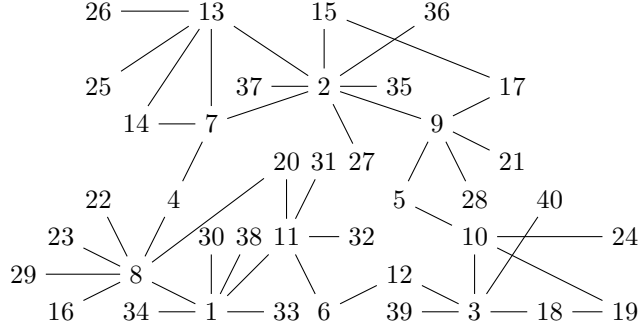
\begin{figure}
    
    \centering
    \begin{tikzpicture}
    [scale=.5]
    \node (n1) at (1,2) {1};
    \node (n2) at (4,8)  {2};
    \node (n3) at (8,2)  {3};
    \node (n4) at (0,5)  {4};
    \node (n7) at (1,7)  {7};
    \node (n8) at (-1,3)  {8};
    \node (n5) at (6,5)  {5};
    \node (n9) at (7,7)  {9};
    \node (n10) at (8,4)  {10};
    \node (n6) at (4,2)  {6};
    \node (n11) at (3,4)  {11};
    \node (n12) at (6,3)  {12};
    \node (n13) at (1,10)  {13};
    \node (n14) at (-1,7)  {14};
    \node (n15) at (4,10)  {15};
    \node (n16) at (-3,2)  {16};
    \node (n17) at (9,8)  {17};
    \node (n18) at (10,2)  {18};
    \node (n19) at (12,2)  {19};
    \node (n20) at (3,6)  {20};
    \node (n21) at (9,6)  {21};
    \node (n22) at (-2,5)  {22};
    \node (n23) at (-3,4)  {23};
    \node (n24) at (12,4)  {24};
    \node (n25) at (-2,8)  {25};
    \node (n26) at (-2,10)  {26};
    \node (n27) at (5,6)  {27};
    \node (n28) at (8,5)  {28};
    \node (n29) at (-4,3)  {29};
    \node (n30) at (1,4)  {30};
    \node (n31) at (4,6)  {31};
    \node (n32) at (5,4)  {32};
    \node (n33) at (3,2)  {33};
    \node (n34) at (-1,2)  {34};
    \node (n35) at (6,8)  {35};
    \node (n36) at (7,10)  {36};
    \node (n37) at (2,8)  {37};
    \node (n38) at (2,4)  {38};
    \node (n39) at (6,2)  {39};
    \node (n40) at (10,5)  {40};

    \foreach \from/\to in   {n1/n8,n8/n4,n4/n7,n7/n2,n2/n9,n9/n5,n5/n10,n10/n3,n1/n11,n11/n6,n6/n12,n12/n3,n13/n7,n14/n7,n15/n2,n16/n8,n17/n9,n18/n3,n19/n18,n10/n19,n20/n11,n9/n21,n22/n8,n20/n8,n15/n17,n13/n14,n23/n8,n24/n10,n13/n2,n25/n13,n26/n13,n2/n27,n9/n28,n29/n8,n30/n1,n11/n31,n11/n32,n1/n33,n1/n34,n2/n35,n2/n36,n2/n37,n1/n38,n3/n39,n3/n40}
    \draw (\from) -- (\to);

    \end{tikzpicture}
    \caption{The graph $S$ constructed by using the vertex set $M = \{1, 2, 3\}$ of a simplex in $\mathbb{R}^{2}$ and $k = 2$.}
    \label{fig:ex3}
\end{figure}

Next, we present a convenient approach to construct a perfect CSN. Let $k$ be given and $M$ be a collection of vertices of a simplex in $\mathbb{R}^{n}$. Now, for every $a, b \in M$, we place a vertex $e$ and two sequences of vertices $\{a_{i}\}_{i = 1}^{k-1}, \{b_{i}\}_{i = 1}^{k-1}$. Then, we add edges $(a, a_{1}), (b, b_{1}), (a_{k-1}, e), (b_{k-1}, e), (a_{i}, a_{i+1}),$ and $(b_{i}, b_{i+1})$ for each $i = 1, \dots k-2$. Moreover, around each $a \in M$, we might add vertices and edges such that all of those vertices can be connected to $a$ by a path with length less than $k$. Under this construction, we obtain a connected graph $S$. By choosing proper $C$ and $f$ such that every path with length less or equal to $2k$ is efficient, we construct a CSN $\hat{S} = (S, C, f)$ with a covering set $M_{k} = M$.

\begin{example}
    Let $M = \{1, 2, 3\}$ be the vertex set of of a simplex in $\mathbb{R}^{2}$ and $k$ = 2. Applying the method we mention before, we construct the CSN $\hat{S}$ by the graph $S$ described in Figure \ref{fig:ex3} and some proper $C$, $f$. Under $\hat{S}$, if vertex $1$ wants to send a message to vertex $17$, vertex $1$ should encode the message by the code $C$ and send the message along an efficient path to vertex $4$ first. Then, vertex $4$ dose once error-correction process. Next, vertex $4$ sends the message to vertex $17$. Final, vertex $17$ implement the error-correction and decoding process to obtain the message.
\end{example}

\section{The size of a perfect CSN}

In this section, we discuss the size of a perfect CSN. First, we give the following definitions.

\begin{definition}
Let $\hat{S}$ be a CSN.
\begin{enumerate}
    \item
    $\hat{S}$ with a covering set $M_{r} = \{\alpha_{i}\}$ is said to have density $e$ if $B_{r}(\alpha_{i})$ contains at most $e$ vertices for each $i$.
    \item 
    If $\hat{S}$ is perfect, the minimum size of a covering set of $\hat{S}$ is called the dimension of $\hat{S}$.
    \item
    Let $\alpha$ be a vertex of $\hat{S}$ with $\alpha \in B_{r}(\alpha_{i})$ and $l^{(n)}(\alpha)$ be the collection of vertices of $V_{i}$ such that for $\beta \in l^{(m)}(\alpha)$, the shortest path from $\alpha$ to $\beta$ has length less or equal to $n$. We define the influence of $\alpha$ with degree $i$ as $\lvert l^{m}(\alpha) \rvert$ and denote it by $L^{(m)}(\alpha)$.
\end{enumerate}
\end{definition}

\begin{remark}
    For $\alpha \in V_{i}$, $\beta \in V_{j}$ with $i \neq j$ and $\alpha, \beta \notin V_{i} \cap V_{j}$, the communication between $\alpha$ and $\beta$ should apply at least once error-correction process before $\beta$ receives the message. It costs time and resource. Thus, we limit the influence of $\alpha$ to the vertices which the transmission process do not need to do any error-correction before the receiver receives the message.
\end{remark}

\begin{lemma}
Let $\alpha$ be a vertex in $B_{r}(\alpha_{i})$. Then, 
    \[
    L^{(m)}(\alpha) \leq 
    \left\{
    \begin{aligned}
        &\lvert V_{i} \rvert, & &\text{ if } \alpha \notin V_{j} \text{ for } j \neq i,\\
        &\sum_{a_{j}}\lvert V_{a_{j}} \rvert, & &\text{ if } \{V_{a_{j}}\}_{j = 1}^{k} \text{ is the maximum set s.t } \alpha \in \bigcap_{a_{j}}V_{a_{j}},
    \end{aligned}
    \right.
    \]
    for each $m$.
\end{lemma}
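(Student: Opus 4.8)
The plan is to unwind the definition of the influence $L^{(m)}(\alpha) = \lvert l^{(m)}(\alpha)\rvert$ and to observe that, in each of the two cases, the set $l^{(m)}(\alpha)$ is contained in a prescribed union of cover-piece vertex sets; the stated bounds then follow from nothing more than subadditivity of cardinality. Recall that $l^{(m)}(\alpha)$ collects those vertices $\beta$ of the relevant piece(s) whose shortest-path distance to $\alpha$ is at most $m$, so every element of $l^{(m)}(\alpha)$ already belongs to the vertex set $V_i$ (or $V_{a_j}$) under consideration, and the distance threshold can only make this set smaller.

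First I would treat the case $\alpha \notin V_j$ for all $j \neq i$, so that $\alpha$ lies in the single piece $B_r(\alpha_i)$. Here the definition gives $l^{(m)}(\alpha) \subseteq V_i$ directly, whence $L^{(m)}(\alpha) \leq \lvert V_i\rvert$. This inequality is uniform in $m$ precisely because the containment $l^{(m)}(\alpha) \subseteq V_i$ holds for every distance threshold.

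For the second case, let $\{V_{a_j}\}_{j=1}^{k}$ be the maximal family of pieces with $\alpha \in \bigcap_{j} V_{a_j}$. The natural reading of $l^{(m)}(\alpha)$ is then to gather the distance-$\leq m$ vertices taken from all pieces containing $\alpha$, which gives the containment
\[
l^{(m)}(\alpha) \subseteq \bigcup_{j=1}^{k} V_{a_j}.
\]
Subadditivity of cardinality then yields
\[
L^{(m)}(\alpha) = \Bigl\lvert l^{(m)}(\alpha)\Bigr\rvert \leq \Bigl\lvert \bigcup_{j=1}^{k} V_{a_j}\Bigr\rvert \leq \sum_{j=1}^{k} \lvert V_{a_j}\rvert,
\]
which is the claimed estimate.

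The technical content is thus only the cardinality bound, and I do not expect it to present any difficulty. The hard part will instead be interpretive: since the definition of $l^{(m)}(\alpha)$ is phrased in terms of a single $V_i$, one must first fix the correct meaning when $\alpha$ lies in several overlapping pieces, and the sensible choice is to take the union over the maximal family of pieces containing $\alpha$. I would also note explicitly that the union bound is generally strict, because reachability of the covering forces the $V_{a_j}$ to overlap and those shared vertices are double-counted on the right-hand side; this is why an inequality, not an equality, is the correct form of the lemma.
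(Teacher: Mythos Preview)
The paper states this lemma without proof, treating it as an immediate consequence of the definitions. Your argument---observing that $l^{(m)}(\alpha)$ is contained in the relevant $V_i$ (or in $\bigcup_j V_{a_j}$) and then applying subadditivity of cardinality---is correct and is precisely the natural justification the paper is implicitly relying on; your remark about the interpretive ambiguity when $\alpha$ lies in several pieces is also well taken, since the paper's definition of $l^{(m)}(\alpha)$ is indeed phrased only for a single $V_i$.
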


Now, we give the lower bound and upper bound for the number of vertices of a perfect CSN with dimension $n$ and density $e$.

\begin{lemma}
    Let $\hat{S}$ be a perfect CSN with radius $r$, dimension $n$ and density $e$. Then, 
    \[
    rn + \binom{n}{2} \leq V \leq ne - \binom{n}{2}. 
    \]
\end{lemma}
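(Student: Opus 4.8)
The plan is to fix a minimum covering set $M_{r} = \{\alpha_{1}, \dots, \alpha_{n}\}$ realizing the dimension, and to work with the $n$ vertex sets $V_{i}$ of the balls $B_{r}(\alpha_{i})$. By the definition of dimension together with perfectness, these $n$ sets satisfy three properties simultaneously: $\bigcup_{i} V_{i} = V$, $\lvert V_{i}\rvert \le e$ for every $i$, and $V_{i} \cap V_{j} \neq \emptyset$ for every pair $i \neq j$. Both inequalities will be extracted from these three facts by double counting the incidences between vertices and the sets $V_{i}$; the role of minimality of $M_{r}$ is to prevent the overlaps from concentrating, which is what makes the clean coefficients $\binom{n}{2}$ appear.

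For the upper bound I would introduce, for each $v \in V$, the multiplicity $d(v) = \lvert\{i : v \in V_{i}\}\rvert$, so that $\sum_{v} d(v) = \sum_{i} \lvert V_{i}\rvert \le ne$ and $\lvert V\rvert = \sum_{v} 1 = \sum_{i}\lvert V_{i}\rvert - \sum_{v}(d(v) - 1)$. The target $\lvert V\rvert \le ne - \binom{n}{2}$ is then equivalent to the redundancy bound $\sum_{v}(d(v)-1) \ge \binom{n}{2}$. Counting pairs instead, each of the $\binom{n}{2}$ pairs $(i,j)$ contributes a common vertex by perfectness, and summing over vertices gives $\sum_{v}\binom{d(v)}{2} \ge \binom{n}{2}$. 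The remaining step is to pass from $\sum_{v}\binom{d(v)}{2}$ to $\sum_{v}(d(v)-1)$; this is exactly the point where minimality of the covering must be used to bound the per-vertex multiplicities $d(v)$ from above, since a single vertex shared by all $n$ balls would otherwise cover all pairs with redundancy only $n-1$, but such a configuration collapses the dimension to $1$, contradicting that $\hat{S}$ has dimension $n$.

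For the lower bound I would exhibit $rn + \binom{n}{2}$ forced vertices in two disjoint families. First, perfectness supplies, for each of the $\binom{n}{2}$ pairs, a meeting vertex in $V_{i} \cap V_{j}$, and minimality lets me argue these can be chosen pairwise distinct, contributing $\binom{n}{2}$ vertices. Second, since $r$ is the radius, i.e.\ the maximal value for which a covering set exists, each center $\alpha_{i}$ must reach distance $r$, so $B_{r}(\alpha_{i})$ contains a geodesic $\alpha_{i} = v_{0}^{i}, v_{1}^{i}, \dots, v_{r}^{i}$ of length $r$; these geodesics account for $r$ further vertices per center, and minimality again forces enough of them to be distinct from one another and from the meeting vertices, giving the additional $rn$.

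The hard part will be the two distinctness and disjointness claims, which are the crux of both bounds rather than routine bookkeeping. The naive inclusion–exclusion estimate $\lvert\bigcup V_{i}\rvert \le \sum \lvert V_{i}\rvert - \binom{n}{2}$ is false without further hypotheses, as the configuration in which all balls share a single vertex saturates every pairwise intersection at once, so the essential work is to show that in a minimum perfect covering no vertex can lie in enough balls to defeat the count, and dually that the radius-$r$ spokes cannot collapse onto the shared meeting vertices. I expect to establish these by a reduction argument: any such collapse would let me delete or merge a ball while keeping a covering (contradicting minimality of the dimension), or enlarge a ball's reach (contradicting maximality of the radius).
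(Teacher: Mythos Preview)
Your overall counting strategy is the paper's: work with a covering set $M_{r}=\{\alpha_{1},\dots,\alpha_{n}\}$, bound each $\lvert V_{i}\rvert$ by $e$ for the upper estimate, invoke perfectness to produce $\binom{n}{2}$ intersection vertices, and use the radius to produce $r$ ``spoke'' vertices per ball for the lower estimate. The paper's own proof is a two-line sketch that does \emph{not} engage with the distinctness issues you raise; it simply asserts that intersection vertices are ``counted twice'' and that each ball has $r$ vertices ``outside any intersections,'' and reads off $ne-\binom{n}{2}$ and $rn+\binom{n}{2}$ directly from that.

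Your proposed repair via minimality, however, has a real gap. The claim that a vertex lying in all $n$ balls forces the dimension to collapse to $1$ is false. Take $S$ to be $n$ paths of length $r$ glued at a common endpoint $v$, with the centers $\alpha_{i}$ at the far tips. Each $B_{r}(\alpha_{i})$ is exactly the $i$-th path together with $v$; all $n$ balls meet only at $v$; yet no ball can be dropped, since each tip $\alpha_{i}$ lies in a unique ball, so the dimension is still $n$. In this configuration the redundancy is $\sum_{w}(d(w)-1)=n-1$, which for $n\ge 3$ is strictly smaller than $\binom{n}{2}$, so the inequality $\sum_{w}(d(w)-1)\ge\binom{n}{2}$ you need cannot be extracted from minimality of the covering. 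The same example defeats your lower-bound plan to choose the $\binom{n}{2}$ pairwise witnesses pairwise distinct: there is only one witness available. What the paper is tacitly assuming---and what you should record as an additional hypothesis rather than try to derive---is that the $\binom{n}{2}$ pairwise intersections are witnessed by \emph{distinct} vertices; absent that, neither bound follows from the stated data, and your reduction-to-minimality program cannot close the hole.
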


\begin{proof}
    The upper bound is simple. Every vertex in a covering set represents at most $e$ vertices. However, the vertices in the intersection of two neighborhoods are counted twice. Thus, we obtain the number $ne - \binom{n}{2}$. For the lower bound, each neighborhood of a vertex in a covering set contains at least $r$ vertices outside any intersections. Thus, there are at least $nr + \binom{n}{r}$ vertices.
\end{proof}

\section{Application}

\subsection{Background and Our Aims}

Although the perfect social network is difficult to find in real society ,inefficient social networks which have inefficient information transmission examples are vasely. Inefficient information transmission can lead to information asymmetry, which occurs when one party has access to more or better information than the other, when one party has an informational advantage, they may be able to exploit this advantage to gain a better outcome for themselves at the expense of the other party (For example: \cite{Gun15}, \cite{GWZ18}, \cite{RW17}) The inefficient decision-making, especially in the public policy, is also a significant outcome of inefficient information transmission within the social network (For example: \cite{LM20}, \cite{WM20}, \cite{ZQVEW19})   
  
Our aim in this example is to showcase the inefficiency of Hu Jintao's leadership through an example of Chinese elite politics. According to a multitude of elites, his leadership has been widely criticized for being ineffective. Keller \cite{kel16} has conducted remarkable research on Chinese political elites, and we will utilize her work by coding and applying our algorithm based on the Chinese political elite network of 2012 drawn by Keller. This will enable us to demonstrate the inefficiency of the social network during the Hu Jintao era.

\subsection{Efficient or inefficient}

If Liu Yuan wants to send a message to Hu Chunhua, the shortest path is
\[
\text{Liu Yuan} \rightarrow \text{Xi Jinping} \rightarrow \text{Li Zhanshu} \rightarrow \text{Hu Jintao} \rightarrow \text{Hu Chunhua}.
\]

Take $p = 0.75$, we can calculate
\[
n \sum_{k = 0}^{2} \binom{4}{2k+1} p^{2k+1} (1-p)^{4-2k-1} = 4n \left( p (1-p)^{3} + p^{3} (1-p)\right) = \frac{15n}{32}.
\]
Thus, the transmission path is not efficient for all code with $\frac{15n}{32} \leq \lfloor \frac{d-1}{2} \rfloor$, so isn't the social network. Moreover, there do not exist an efficient covering since the transmission path with length $1$ can not be efficient.

\subsection{Comparative the Path of CCP 2012 Social Network and Efficient Social Network}

If we take $p = 0.001$, we can calculate that for $n = 10$, $d \geq 3$, any path with length less or equal to $100$ is efficient, which implies the social network is efficient. In this case, we only do once error-correction process.

Clearly, for given social network $S$ and a code $C$ with minimum distance greater or equal to $3$ (i.e, $C$ can correct at least $1$ error), there exists a probability $p$ which is small enough such that $\hat{S} = (S, p, C)$ is efficient. However, in some cases, we do not have a communication channel whose $p$ is small enough. For example, when we discuss a communication between several people and each of them sends their message to others using nature language by face to face. There might be lots of misunderstanding. Thus, in such case, $p$ might be very high. Then, our model can offer a convenient method to deal with this situation.

\subsection{Error Detection and Correction of CCP 2012 Social Network}

Now, we consider $p = 0.05$ and $C$ is a binary code with $n = 10$, $d = 3$. By calculation, we know that every path with length less or equal to $2$ is efficient. Thus, the social network we discuss in this section is not efficient. However, it is possible to cover it by efficient networks. For example, we may simply take every important name (i.e the name connects to many people), such as Bo Xilai, Jiang Zemin and so on, as $\alpha_{i}$'s. It offers us a covering $\{B_{1}(\alpha_{i})\}$. Then, the error-correction can be done and the error in the message transmission can be avoided.


\end{document}